\documentclass[journal,final,twocolumn]{IEEEtranTCOM}

\usepackage{graphicx,cite,epsfig,amssymb,amsmath,url,stfloats,latexsym}
\usepackage{array}
\usepackage{arydshln}
\usepackage{amsfonts}
\usepackage{pgfplots}
\usepackage{algorithm}
\usepackage{algorithmic}
\usepackage{booktabs} % For better table formatting
\usepackage{epstopdf}
\usepackage{amsfonts,amsthm}
\usepackage{multirow}
\usepackage{mathrsfs}
\usepackage{xcolor}
\usepackage{amsfonts}
\usepackage{graphicx}
\usepackage{tabularx}
\usepackage{array}
\usepackage{arydshln}
\usepackage{multicol}
\usepackage{multirow} 
\usepackage{mathtools}
\usepackage[font=scriptsize]{subfig}
\usepackage[font=small]{caption}
\usepackage{textcomp}
\usepackage{stfloats}
\usepackage{booktabs}
\usepackage{verbatim}
\usepackage{graphicx}
\usepackage{cite}
\usepackage{url}

\let\labelindent\relax  %Since the \labelindent command exists for legacy reasons in the IEEE template, you can simply "disable" it by adding the following before importing the enumitem package:
\usepackage{enumitem}

\usepackage{amssymb}
\usepackage{amsthm,extpfeil}

\definecolor{mygray}{gray}{.9}
\graphicspath{{figures/}}
\newcolumntype{C}[1]{>{\PreserveBackslash\centering}p{#1}}
\newcolumntype{R}[1]{>{\PreserveBackslash\raggedleft}p{#1}}
\newcolumntype{L}[1]{>{\PreserveBackslash\raggedright}p{#1}}

\usepackage[colorlinks=false,linkcolor=black, bookmarksnumbered]{hyperref}

\newtheorem{theorem}{Theorem}

\newtheorem{example}{Example}

\usepackage[flushleft]{threeparttable}

\newcommand\CPM{\text{CPM}}
\newcommand\PaG{\text{PaG}}

\begin{document}

\sloppy
%
% paper title
% can use linebreaks \\ within to get better formatting as desired
\title{A Global Coding Scheme for OFDM over Finite Fields}
\author{Juane Li, Qi-yue Yu,~\IEEEmembership{Senior Member,~IEEE}, Khaled Abdel-Ghaffar, Shu Lin, ~\IEEEmembership{Life~Fellow,~IEEE}
%\thanks{Juane Li is with Micron Technology Inc., San Jose, CA 95131, USA (e-mail:jueli@ucdavis.edu). Xin Xiao is with the Department of Electrical and Computer Engineering, University of Arizona, Tucson, AZ 85712, USA (e-mail: 7xinxiao7@email.arizona.edu). Shu Lin and Khaled Abdel-Ghaffar are with the Department of Electrical and Computer Engineering, University of California, Davis, CA 95616, USA (e-mails: \{shulin, kaghaffar\}@ucdavis.edu).}
} 
% make the title area
\maketitle

\begin{abstract}
This paper proposes a highly efficient global coded-multiplexing scheme, conceptualized as Orthogonal Frequency Division Multiplexing over a finite field (FF-OFDM), for reliable multiuser communications. By utilizing a prime length cyclic code and its Hadamard equivalents as algebraic subcarriers, independent data streams are globally multiplexed via a Galois Fourier Transform (GFT) without rate loss. We show that this finite-field synthesis intrinsically generates a global Quasi-Cyclic Low-Density Parity-Check (QC-LDPC) code over $\mathrm{GF}(2^s)$, whose parity-check matrix is governed by the structural rigor of partial geometries. At the receiver, supported by a binary decomposition theorem, the received nonbinary global codeword is jointly decoded using parallel binary iterative soft-decision algorithms prior to demultiplexing. This joint decoding enables seamless reliability information sharing across all user streams, achieving near-bound error performance, rapid convergence without error floors, and strictly linear amortized decoding complexity.
\end{abstract}

% Note that keywords are not normally used for peerreview papers.
\begin{IEEEkeywords}
Multiuser communication, 
multiplexing, orthogonal frequency division multiplexing (OFDM), finite field multiple access (FFMA), 
cyclic code, 
Hadamard equivalents, 
global coding, partial geometry, low-density parity-check (LDPC) code, Galois Fourier transform.
\end{IEEEkeywords}

\section{Introduction}\label{sect1}

Multiplexing techniques are fundamental to the architecture of high-speed communication systems, enabling the simultaneous transmission of multiple information streams over a shared medium. Historically, this domain has evolved from basic time division multiplexing (TDM) and frequency division multiplexing (FDM) to the highly spectral-efficient orthogonal frequency division multiplexing (OFDM) \cite{OFDM1, OFDM2}. In recent years, the field has seen further innovations such as random multiplexing \cite{RandomMul}. By explicitly decoupling the multiplexing operations from the physical channel characteristics, this approach extends its applicability to arbitrary norm-bounded and spectrally convergent channel matrices.

Among these techniques, OFDM serves as the quintessential example of \textit{multiplexing in the classical signal domain}. Its core principle relies on \textit{frequency-domain multiplexing}: multiple user data streams are modulated onto orthogonal subcarriers and then synthesized into a single composite signal. This allows the receiver to separate the streams efficiently, provided that orthogonality is maintained. Essentially, OFDM transforms the problem of transmitting multiple streams into a parallel transmission over the continuous frequency spectrum.

In a parallel line of research, the \textit{finite-field multiple-access (FFMA) technique} was recently proposed \cite{FFMA1, FFMA_ITW_2024, FFMA2, FFMA3, FFMA_VTC_2025, FFMA_GC_2025, FFMA_polar, FFMA_5m}. Conventional multiple access (MA) techniques generally perform channel coding first, followed by multiplexing, which inherently treats coding and multiplexing as two distinct and decoupled modules \cite{Ref_1, Ref_2, Ref_3, Ref_4, Ref_5, Ref_6, Ref_7, Ref_8}. In contrast, FFMA deliberately reverses this conventional order. By multiplexing the data before applying channel coding, the FFMA system effectively fuses these traditionally separate modules into a unified \textit{coded-multiplexing} framework, referred to as FFMA-coding. This structural integration efficiently addresses the multiuser finite blocklength (FBL) challenge, allowing the receiver to jointly decode the entire FFMA-coded block before demultiplexing the individual user streams.

Furthermore, rather than relying on orthogonal physical resources (such as time or frequency slots) to separate users, FFMA distinguishes them algebraically by assigning unique element-pairs (EPs) to different users or bits. Constructed over finite fields, these EPs serve as virtual resources. The Cartesian product of $M$ distinct EPs forms an EP code, which must satisfy the \textit{unique sum-pattern mapping (USPM)} structural constraint to unambiguously separate users within the finite field. Recent studies \cite{FFMA2} have also proven that any $(n, k)$ linear channel code can function as a codeword-wise EP code, since its $k$ linearly independent codewords act as a basis in the finite field. In other words, a codeword itself can be fundamentally regarded as an algebraic virtual resource to multiplex data streams.

Despite these theoretical advances, a fundamental bottleneck remains in scaling such integrated systems for powerful high-capacity communications. In conventional multi-user designs, multiplexing merely distributes data to avoid interference, contributing little to the system's error-correction capability. Furthermore, while algebraic multiplexing bridges the gap between coding and multiplexing, directly scaling these coupled schemes to high-order Galois fields $\mathrm{GF}(2^s)$ inevitably encounters the prohibitive computational complexity of nonbinary joint decoding algorithms. This complexity bottleneck severely limits the achievable block-length diversity in massive data stream transmission.

To fundamentally overcome this bottleneck, inspired by the algebraic virtual resource concept and the structural elegance of classical OFDM, we propose a paradigm-shifting global coded-multiplexing scheme conceptualized as \textit{Orthogonal Frequency Division Multiplexing over a finite field (FF-OFDM)}.  This scheme is specifically tailored for high-capacity point-to-point and broadcast communication scenarios. Just as OFDM utilizes orthogonal frequencies to multiplex signals in the complex domain, our proposed model utilizes \textit{Hadamard equivalent codes} \cite{Ref_9, Ref_10, Ref_11} to multiplex information streams in the finite field domain. In this framework, the ``subcarriers'' are orthogonal-like algebraic codes acting as virtual resources, and the synthesis of the ``multiplexed signal'' corresponds to the global coupling of these codewords. This approach successfully translates the benefits of signal-domain orthogonality into the error-correction domain, fusing multiplexing and cooperative coding into a single unified operation.

In the proposed scheme, a cyclic code $C$ of prime length $n$ acts as the \emph{base code}. This code, along with its $n - 2$ Hadamard equivalent codes \cite{Ref_11}, form the basis for our finite-field multiplexing. Together with an $(n, n - 1)$ single parity-check (SPC) code, they strictly support the transmission of $ns$ distinct information streams within a unified framework, where $n$ is a prime factor of $2^s - 1$ and $s \geq 3$.

At the multiplexing stage, the $ns$ codewords from the input streams are globally coupled into a single global codeword over $\mathrm{GF}(2^s)$. This global codeword resides in the null space of a low-density parity-check (LDPC) matrix derived from the line-point incidence matrix of a subgeometry of a partial geometry. The collection of these globally coupled codewords forms a structured quasi-cyclic (QC) LDPC code over $\mathrm{GF}(2^s)$.

Crucially, to circumvent the nonbinary decoding bottleneck, the FF-OFDM global code is intrinsically designed to be decomposed and decoded iteratively using a \textit{binary soft-decision decoding algorithm} \cite{Ref_11, Ref_12}. The process of mapping messages into codewords and algebraically combining them is referred to as \textit{global encoding}, which effectively functions as the finite-field multiplexer. At the receiving end, demultiplexing and decoding are not isolated operations; rather, they are performed jointly. The received global codeword is first decoded using a parallel binary iterative soft-decision decoding algorithm, such as the sum-product algorithm (SPA) \cite{Ref_13}, the min-sum algorithm (MSA) \cite{Ref_14}, or their variations \cite{Ref_11, Ref_12}. This joint decoding paradigm transforms independent data streams into a mutually protective global graph, allowing reliability information to be shared seamlessly among all multiplexed streams prior to demultiplexing, which significantly enhances the error performance of each individual stream.

In summary, the proposed FF-OFDM scheme is a profound integration of Hadamard equivalent codes, block interleaving, Galois Fourier transform, and partial geometries. It functions as a powerful \textit{coded-multiplexing technique}, where independent data streams are synthesized without rate loss into an ultra-long, highly efficient LDPC code. This ensures near-capacity reliable transmission, harvesting massive cooperative coding gains with strictly manageable linear decoding complexity.

The rest of this paper is organized as follows. Section \ref{sect2} establishes the mathematical preliminaries, detailing the algebraic properties of cyclic codes, Hadamard equivalents, and partial geometries. Section \ref{sect3} introduces the architecture of the proposed FF-OFDM system, delineating the GFT-based multiplexing transmitter and the joint parallel receiver. In Section \ref{sect4}, we provide a rigorous theoretical analysis of the globally coupled QC-LDPC code and formalize the Binary Decomposition Theorem. Section \ref{sect5} defines the error performance metrics and evaluates the amortized decoding complexity. Section \ref{sect6} presents extensive simulation results across high-capacity point-to-point and broadcast scenarios to validate the proposed scheme. Finally, Section \ref{sect7} concludes the paper with some remarks.

\section{Preliminaries and System Components}\label{sect2}
To establish the FF-OFDM framework, we first formalize the algebraic ``subcarriers'' and the finite-field transform matrices, linking them to the geometric structures that guarantee excellent decoding performance.

\subsection{Cyclic Codes of Prime Lengths and Their Hadamard Equivalents}

For $s \geq 3$, let $\mathrm{GF}(2^s)$ be the Galois field of characteristic 2 with $2^s$ elements. Let $n$ be a prime factor of $2^s - 1$ and $\beta$ be an element of order $n$ in the field $\mathrm{GF}(2^s)$. The elements $\beta^0 = 1, \beta, \beta^2, \ldots, \beta^{n - 1}$ in $\mathrm{GF}(2^s)$ form a cyclic subgroup $\bf G$ of $\mathrm{GF}(2^s)$ with order $n$. For $1 \leq m < n$, let $C$ be an $(n, n - m)$ cyclic code over $\mathrm{GF}(2)$ of length $n$ with dimension $n - m$ whose generator polynomial ${\bf g}(X)$ has $\beta^{l_0}, \beta^{l_1}, \ldots, \beta^{l_{m - 1}}$ as roots, where $0 \leq l_0, l_1, \ldots, l_{m-1} < n$. The generator polynomial 
\begin{equation}
  {\bf g}(X) = (X + \beta^{l_0})(X + \beta^{l_1}) \cdots (X + \beta^{l_{m - 1}}) 
\end{equation}
of the cyclic code $C$ is a polynomial over $\mathrm{GF}(2)$ of degree $m$ \cite{Ref_9, Ref_11}. 

The parity-check matrix of the cyclic code $C$ in terms of the roots of its generator polynomial is an $m \times n$ matrix over $\mathrm{GF}(2^s)$, given by:
\begin{equation}\label{eq:base_matrix_cyclic_code}
\begin{array}{ll}
  {\bf B} &= [\beta^{j l_i}]_{0 \leq i < m, 0 \leq j < n}  \\
  &= \begin{pmatrix}
    1 & \beta^{l_0} & \beta^{2l_0} & \cdots & \beta^{(n - 1)l_0} \\
    1 & \beta^{l_1} & \beta^{2l_1} & \cdots & \beta^{(n - 1)l_1} \\
    \vdots & \vdots & \vdots & \ddots & \vdots \\
    1 & \beta^{l_{m-1}} & \beta^{2l_{m-1}} & \cdots & \beta^{(n - 1)l_{m-1}} 
  \end{pmatrix}. 
\end{array}
\end{equation}

The null space over $\mathrm{GF}(2)$ of $\bf B$ gives the cyclic code $C$. An $n$-tuple $\mathbf{v} = (v_0, v_1, \ldots, v_{n-1})$ over $\mathrm{GF}(2)$ is a codeword in $C$ if and only if $\mathbf{v} \cdot {\bf B}^{\rm T} = \mathbf{0}$, where ${\bf B}^{\rm T}$ is the transpose of $\bf B$. In polynomial form, the vector $\mathbf{v}$ is represented by a polynomial ${\bf v}(X) = v_0 + v_1 X + \cdots + v_{n-1} X^{n - 1}$ over $\mathrm{GF}(2)$ of degree $n - 1$ or less. A polynomial ${\bf v}(X)$ is a code polynomial in $C$ if and only if it is divisible by the generator polynomial ${\bf g}(X)$. 

For $1 \leq k < n$, raising every entry in $\bf B$ to a power of $k$, we obtain the following matrix over $\mathrm{GF}(2^s)$:
\begin{equation}\label{eq:hadamard_power_matrix}
\begin{array}{ll}
  {\bf B}^{\circ k} & = [\beta^{k j l_i}]_{0 \leq i < m, 0 \leq j < n}  \\ 
  & = \begin{pmatrix}
1 & \beta^{k l_0} & \beta^{2k l_0} & \cdots & \beta^{(n - 1)k l_0} \\
1 & \beta^{k l_1} & \beta^{2k l_1} & \cdots & \beta^{(n - 1)k l_1} \\
\vdots & \vdots & \vdots & \ddots & \vdots \\
1 & \beta^{k l_{m-1}} & \beta^{2k l_{m-1}} & \cdots & \beta^{(n - 1)k l_{m-1}} 
\end{pmatrix}.
\end{array}
\end{equation}
This matrix is called the $k$-th \textit{Hadamard power} of $\bf B$ \cite{Ref_11}. Label the columns of both ${\bf B}$ and ${\bf B}^{\circ k}$ from $0$ to $n - 1$. For $1 \leq t < n$, let $j$ be the remainder resulting from dividing $tk$ by $n$. Since $n$ is a prime, the mapping $t \leftrightarrow j$ is a one-to-one mapping for every $k$ ($1 \leq k < n$). Consequently, the $t$-th column of ${\bf B}^{\circ k}$ is identical to the $j$-th column of ${\bf B}$.
The $k$-th Hadmard power of ${\bf B}$ simply permutes the columns of ${\bf B}$.
Let $\pi_k$ denote this column permutation, called the $k$-th Hadamard permutation. Then, ${\bf B}^{\circ k} = \pi_k({\bf B})$.

The null space over $\mathrm{GF}(2)$ of ${\bf B}^{\circ k}$ gives an $(n, n - m)$ code $C^{\circ k}$, which is an \textit{equivalent code} of $C$ \cite{Ref_11}. If $\mathbf{v}$ is a codeword in $C$, the $k$-th Hadamard permutation $\pi_k(\mathbf{v})$ gives a codeword in $C^{\circ k}$. The code $C^{\circ k}$ is called the $k$-th Hadamard equivalent of $C$, with its generator polynomial having $\beta^{kl_0}, \beta^{kl_1}, \ldots, \beta^{kl_{m-1}}$ as roots.

For $k = 1$, ${\bf B}^{\circ 1} = {\bf B}$ and ${C}^{\circ 1} = {C}$. For $k = 0$, ${\bf B}^{\circ 0}$ is an all-ones matrix whose null space over $\mathrm{GF}(2)$ gives an $(n, n - 1)$ single parity-check (SPC) code ${C}_0 = {C}^{\circ 0}$. Hence, there are $n$ distinct codes (the SPC code and $n-1$ Hadamard equivalents) that form the mathematical basis for the FF-OFDM ``subcarriers.''

\subsection{Vandermonde Matrix and Partial Geometry}
To facilitate the global coupling of the subcarriers, we construct a Vandermonde matrix based on the cyclic subgroup $\bf G$ \cite{Ref_11}. This matrix encapsulates the parity-check matrix $\bf B$ as a submatrix. Let $\beta$ be the generator element of $\bf G$. Using the $n$ elements in $\bf G$, we form the following $n \times n$ Vandermonde matrix:            
\begin{equation}\label{eq:vandermonde_matrix}
\arraycolsep=2pt\def\arraystretch{1}
  {\bf V} = [\beta^{ij}]_{0 \leq i, j < n}  = \begin{pmatrix}
 1 & 1 & 1 & \cdots & 1 \\
 1 & \beta & \beta^2 & \cdots & \beta^{n - 1} \\
 1 & \beta^2 & (\beta^2)^2 & \cdots & (\beta^2)^{n - 1} \\
 \vdots & \vdots & \vdots & \ddots & \vdots \\
 1 & \beta^{n - 1} & (\beta^{n - 1})^2 & \cdots & (\beta^{n - 1})^{n - 1}
 \end{pmatrix}.
\end{equation}
The Vandermonde matrix ${\bf V}$ is nonsingular, and its inverse is given by ${\bf V}^{-1} = [\beta^{-ij}]_{0 \le i, j < n}$.

For $0 \leq j < n$, we represent the element $\beta^j$ by a circulant permutation matrix (CPM), denoted by $\CPM(\beta^j)$, over $\mathrm{GF}(2)$ of size $n \times n$. Its top row ${\bf w}$ has a ``1'' at position $j$ and ``0'' elsewhere. Dispersing each entry $\beta^{ij}$ in ${\bf V}$ into an $n \times n$ CPM, we obtain an array of $n \times n$ CPMs:
\begin{equation}\label{eqn:CPM_V}
  \CPM({\bf V}) = [\CPM(\beta^{ij})]_{0 \leq i,j < n}.
\end{equation}               
This forms an $n^2 \times n^2$ matrix over $\mathrm{GF}(2)$ with both column and row weights $n$. 

It has been proven that the matrix CPM($\bf V$) is the line-point incidence matrix of a partial geometry, denoted by PaG$(n, n, n-1)$, which consists of $n^2$ points and $n^2$ lines \cite{Ref_11}. The partial geometry PaG$(n, n, n-1)$ has the following fundamental structural properties: (1) Each line consists of $n$ points; (2) Any two points are on at most one line; (3) Each point is intersected by $n$ lines;
(4) If a point $\bf p$ is not on a line $L$, then there are exactly $n - 1$ lines, each passing through $\bf p$ and a point on $L$;
(5) Two lines either are parallel or intersect at one and only one point; and (6) Each line has $n-1$ line parallel to it. The fundamental structural properties of PaG$(n, n, n-1)$ dictates that any two rows or columns in CPM($\bf V$) share at most one 1-entry, referred to as the \textit{row-column (RC)} constraint \cite{Ref_11, Ref_12}.

Since ${\bf B}$ is an $m \times n$ submatrix of ${\bf V}$, its CPM-dispersion 
\begin{equation} \label{eqn:CPM_B}
  \CPM({\bf B}) = [\CPM(\beta^{jl_i})]_{0 \le i < m, 0 \le j < n}
\end{equation}
is an $mn \times n^2$ matrix over $\mathrm{GF}(2)$. Hence, $\CPM({\bf B})$ is the incidence matrix of a subgeometry of $\PaG(n, n, n - 1)$, strictly satisfying the RC-constraint. This structural trait ensures that the Tanner graph associated with $\CPM({\bf B})$ has a girth of at least 6 \cite{Ref_11, Ref_18}, serving as an ideal foundation for a high-performance Quasi-Cyclic (QC) LDPC code.

\begin{figure*}[tbp]
  \centering
  \includegraphics[width=0.7\textwidth]{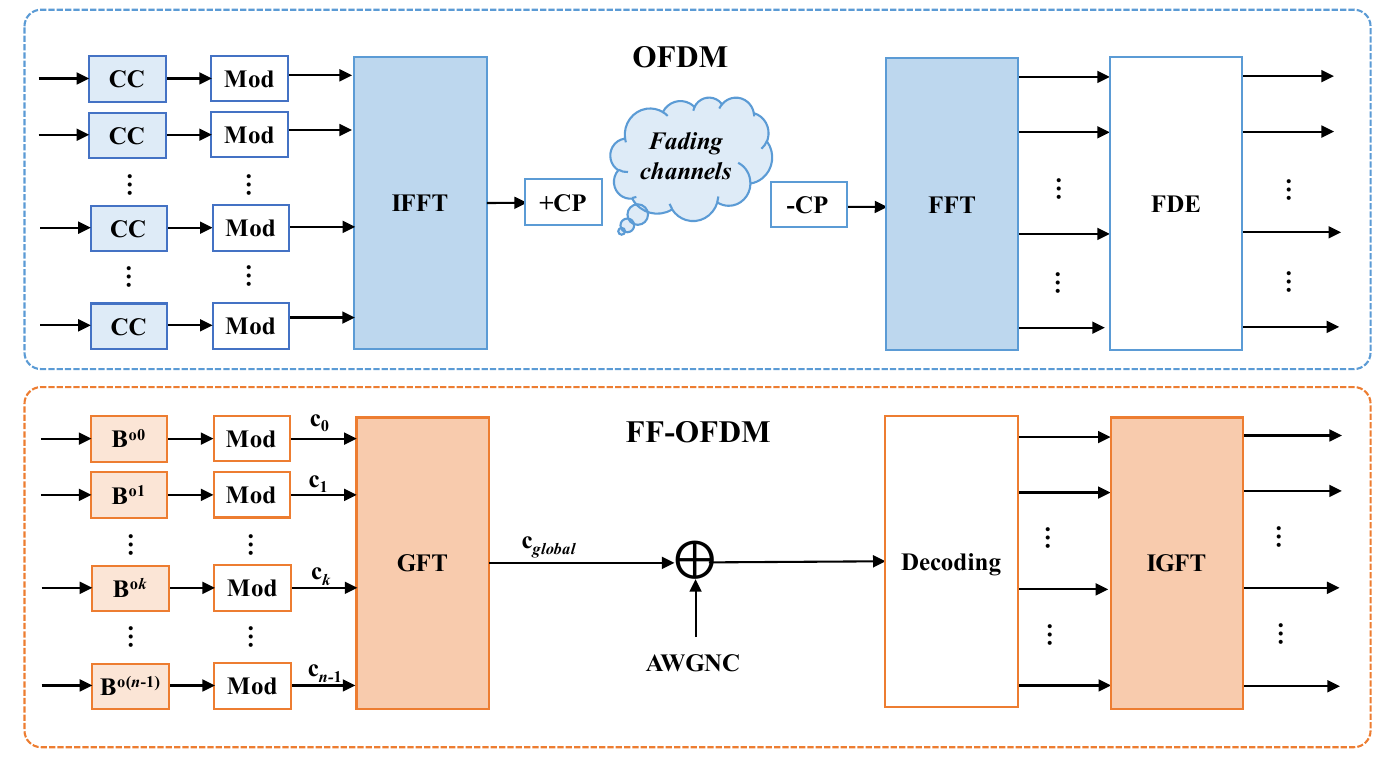}
  \caption{
  Diagram of OFDM and FF-OFDM systems. (CC: channel code; Mod: modulator;
  FFT: Fast Fourier Transform; 
  IFFT: Inverse Fast Fourier Transform;
  +CP: add cyclic prefix; -CP: remove cyclic prefix; FDE: frequency-domain equalization; GFT: Galois Fourier Transform; IGFT: Inverse Galois Fourier Transform; AWGNC: Additive White Gaussian Noise Channel.)}
 \label{fig:SystemMod}
\end{figure*}

\section{System Model of the FF-OFDM System} \label{sect3}

In this section, we introduce the system model of the proposed FF-OFDM system. To provide a clear context for our architecture, it is instructive to first briefly review the classical OFDM transmission paradigm.

In a standard OFDM transmitter, a high-speed data stream is divided into multiple parallel lower-rate streams via a serial-to-parallel (S/P) converter. These parallel streams are mapped to complex constellation symbols and independently modulated onto a set of orthogonal subcarriers. In practice, this orthogonal multiplexing is efficiently realized in the digital domain using the Inverse Fast Fourier Transform (IFFT), which synthesizes the frequency-domain symbols into a composite time-domain signal.

The block diagrams of both the classical OFDM system and the proposed FF-OFDM system are contrasted in Fig. \ref{fig:SystemMod}. Drawing a profound architectural parallel to the classical model, the proposed FF-OFDM framework substitutes the continuous complex-domain operations with rigorous algebraic finite-field counterparts. Rather than utilizing physical frequency subcarriers and the IFFT, our scheme employs orthogonal-like algebraic cyclic codes and the Galois Fourier Transform (GFT).

Suppose there are $ns$ distinct data streams to be transmitted in the FF-OFDM system. The complete transmission and joint decoding processes are detailed in the following subsections.

\subsection{The FF-OFDM Transmitter}
The proposed FF-OFDM transmission scheme is designed to multiplex $ns$ independent binary data streams into a single structured signal. The global encoding process mirrors the logic of classical OFDM: local encoding acts as subcarrier modulation, serial-to-parallel (S/P) conversion acts as the interleaver, and the GFT acts as the orthogonal synthesizer. The complete transmission process is detailed in the following steps.

\subsubsection{Step 1: Local Encoding and Subcarrier Modulation}
The $ns$ independent binary data streams are partitioned into $n$ distinct groups, denoted as $Q_0, Q_1, \ldots, Q_{n - 1}$. Each group $Q_k$ ($0 \leq k < n$) is responsible for processing $s$ binary streams. 
For a given group $Q_k$ where $1\le k < n$, the $s$ binary data streams are locally encoded into $s$ binary codewords ${\bf u}_{k,0}, {\bf u}_{k,1}, ..., {\bf u}_{k,s-1}$ using the $k$-th Hadamard equivalent code $C^{\circ k}$. The $0$-th group $Q_0$ encodes its streams using the $(n,n-1)$ SPC code $C_0$. Each binary codeword has a length of $n$.

To modulate these $s$ parallel binary streams onto their assigned algebraic subcarrier $k$, they are linearly composed into a single \textit{composite codeword} over $\mathrm{GF}(2^s)$. Let $\alpha$ be a primitive element in $\mathrm{GF}(2^s)$ such that $\{1, \alpha, \ldots, \alpha^{s -1}\}$ forms a basis. The composite codeword for the $k$-th group is formulated as:
\begin{equation}
\mathbf{c}_k = \mathbf{u}_{k,0} + \mathbf{u}_{k,1}\alpha + \cdots + \mathbf{u}_{k,s-1}\alpha^{s - 1},
\end{equation}
which can be explicitly written as an $n$-tuple vector:
\begin{equation*}
\mathbf{c}_k = (c_{k,0}, c_{k,1}, \ldots, c_{k,j}, \ldots, c_{k,n-1}), \quad c_{k,j} \in \mathrm{GF}(2^s).
\end{equation*}
By the linearity of the code's null space, it follows that $\mathbf{c}_k ({\bf B}^{\circ k})^{\rm T} = \mathbf{0}$ over $\mathrm{GF}(2^s)$.

\subsubsection{Step 2: Serial-to-Parallel (S/P) Vector Extraction}
In classical OFDM, continuous data is converted into parallel blocks prior to the IFFT. In the FF-OFDM framework, this is achieved by synchronously extracting the $j$-th symbol from all $n$ composite codewords (${\bf c}_0, {\bf c}_1, \ldots, {\bf c}_{n-1}$).

For a specific indexing position $j$ ($0 \leq j < n$), the $n$ symbols are extracted to form a $1 \times n$ parallel input vector over $\mathrm{GF}(2^s)$:
\begin{equation}
\mathbf{c}[j] = (c_{0,j}, c_{1,j}, \ldots, c_{n-1,j}).
\end{equation}
This S/P extraction perfectly emulates the behavior of a block interleaver. By distributing the adjacent symbols of any single codeword $\mathbf{c}_k$ across $n$ different parallel vectors, the system inherently acquires maximum diversity against burst errors.

\subsubsection{Step 3: GFT Synthesis and Serialization}
The multiplexing is finalized by applying the GFT to each parallel vector $\mathbf{c}[j]$. Analogous to the IFFT in the complex domain, the GFT maps the subcarrier-domain symbols into the transmission-domain using the $n \times n$ Vandermonde matrix $\bf V$, given by:
\begin{equation}
\mathbf{c}^{(\mathcal{F})}[j] = \mathbf{c}[j] \cdot {\bf V}, \quad 0 \leq j < n.
\end{equation}
Here, $\mathbf{c}^{(\mathcal{F})}[j]$ is a $1 \times n$ vector over $\mathrm{GF}(2^s)$ representing the $j$-th segment of the globally coupled signal.

The $n$ transformed vectors are then serialized to construct the final global codeword $\mathbf{c}_{global}$:
\begin{equation}
\mathbf{c}_{global} = \left(\mathbf{c}^{(\mathcal{F})}[0], \mathbf{c}^{(\mathcal{F})}[1], \ldots, \mathbf{c}^{(\mathcal{F})}[n-1]\right).
\end{equation}
Note that $\mathbf{c}_{global}$ is a vector of length $n^2$ over $\mathrm{GF}(2^s)$. To prepare for physical transmission, each $\mathrm{GF}(2^s)$ symbol is decomposed into its $s$ binary constituent bits, yielding a binary sequence of length $sn^2$. Finally, these $sn^2$ bits are mapped to physical modulation symbols (e.g., via BPSK, where $0 \rightarrow +1$ and $1 \rightarrow -1$) to generate the transmit signal vector $\mathbf{x} \in \{-1,+1\}^{sn^2}$.

\vspace{-0.2in}

\subsection{The FF-OFDM Receiver}
The receiver architecture demonstrates the most profound advantage of the FF-OFDM paradigm. In classical OFDM, the FFT (demultiplexing) is performed \textit{before} channel decoding. However, in our proposed architecture, the GFT operation mathematically weaves the separate subcarriers into a unified global QC-LDPC structure. Therefore, the receiver performs joint iterative decoding \textit{first}, followed by the IGFT as shown in Fig. \ref{fig:SystemMod}.

Assume the modulated signal vector $\mathbf{x}$ is transmitted over an Additive White Gaussian Noise Channel (AWGNC). The received signal vector $\mathbf{y} \in \mathbb{R}^{sn^2}$ is modeled as:
\begin{equation}
\mathbf{y} = \mathbf{x} + \mathbf{z},
\end{equation}
where $\mathbf{z}$ represents the AWGNC vector whose components are independent and identically distributed (i.i.d.) Gaussian random variables with zero mean and variance $\sigma^2$. It is important to emphasize that while classical OFDM is primarily designed to combat multipath fading, the FF-OFDM system evaluated herein focuses on the AWGNC. This rigorously isolates and validates the pure coding gain and reliability improvement achieved through algebraic finite-field multiplexing.

\subsubsection{Step 1: Soft-Information Extraction and Joint Parallel Decoding}
Unlike hard-decision demodulation, the receiver directly utilizes the unquantized soft information from the received signal $\mathbf{y}$. Assuming a BPSK mapping ($0 \rightarrow +1, 1 \rightarrow -1$), the receiver computes the log-likelihood ratio (LLR) sequence for the $sn^2$ transmitted bits. The global LLR vector, denoted as $\mathbf{L} \in \mathbb{R}^{sn^2}$, is mapped from $\mathbf{y}$ as
\(
\mathbf{L} = \frac{2\mathbf{y}}{\sigma^2}.
\)
Since each $\mathrm{GF}(2^s)$ symbol in the global codeword is formulated from $s$ binary bits, the global LLR vector $\mathbf{L}$ is naturally partitioned into $s$ parallel soft-information sub-vectors:
\begin{equation}
\mathbf{L}_l = (L_{0,l}, L_{1,l}, \ldots, L_{n^2-1,l}), \quad 0 \leq l < s.
\end{equation}
Here, each sub-vector $\mathbf{L}_l \in \mathbb{R}^{n^2}$ directly represents the soft information associated with the $l$-th binary constituent layer of the global codeword.

These $s$ parallel LLR vectors can be decoded independently and simultaneously using binary soft-decision decoders (e.g., the Min-Sum Algorithm) based on the global parity-check matrix ${\bf H}_{global}$, the algebraic structure of which will be rigorously derived in Section \ref{sect4}. During the iterative belief propagation process, the nodes corresponding to different data streams share reliability information through the edges of the global LDPC Tanner graph. Upon convergence, the hard decisions yield the $s$ corrected binary constituent vectors $\tilde{\mathbf{v}}_0, \ldots, \tilde{\mathbf{v}}_{s-1} \in \{0, 1\}^{n^2}$. These binary vectors are then recomposed using the basis $\{1, \alpha, \ldots, \alpha^{s-1}\}$ into the corrected global estimate $\tilde{\mathbf{c}}_{global}$ over $\mathrm{GF}(2^s)$.

\subsubsection{Step 2: Inverse GFT (Finite-Field FFT)}
With the errors optimally corrected, the receiver demultiplexes the signal. The corrected global codeword $\tilde{\mathbf{c}}_{global}$ is partitioned back into $n$ parallel vectors $\tilde{\mathbf{c}}^{(\mathcal{F})}[0], \ldots, \tilde{\mathbf{c}}^{(\mathcal{F})}[n-1]$. The Inverse GFT (analogous to the FFT) is applied to each vector using the inverse Vandermonde matrix ${\bf V}^{-1}$:
\begin{equation}
\tilde{\mathbf{c}}[j] = \tilde{\mathbf{c}}^{(\mathcal{F})}[j] \cdot \mathbf{V}^{-1}, \quad 0 \leq j < n.
\end{equation}
This operation perfectly recovers the $n$ S/P vectors $\tilde{\mathbf{c}}[j]$.

\subsubsection{Step 3: Parallel-to-Serial (P/S) Conversion and Message Retrieval}
The recovered parallel vectors $\tilde{\mathbf{c}}[j] = (\tilde{c}_{0,j}, \tilde{c}_{1,j}, \ldots, \tilde{c}_{n-1,j})$ are subjected to a Parallel-to-Serial (P/S) conversion (deinterleaving). By gathering the $j$-th components across all $n$ parallel vectors, the receiver perfectly reconstructs the $n$ composite codewords:
\begin{equation}
\tilde{\mathbf{c}}_k = (\tilde{c}_{k,0}, \tilde{c}_{k,1}, \ldots, \tilde{c}_{k,n-1}), \quad 0 \leq k < n.
\end{equation}
Finally, each composite codeword $\tilde{\mathbf{c}}_k$ is delivered to the $k$-th receiver group and decomposed over $\mathrm{GF}(2)$. The $s$ estimated binary codewords $\tilde{\mathbf{u}}_{k,0}, \ldots, \tilde{\mathbf{u}}_{k,s-1}$ are obtained, and the original $ns$ user data streams are successfully extracted.

\section{Theoretical Analysis of the Global Code Structure}\label{sect4}
The extraordinary error-correction capability of the joint FF-OFDM receiver hinges entirely on the algebraic properties of the globally coupled signal $\mathbf{c}_{global}$. In this section, we provide a rigorous mathematical derivation demonstrating how the physical processes of cascading, serial-to-parallel (S/P) extraction, and GFT synthesis naturally yield a structured QC-LDPC code based on partial geometries.

\subsection{Algebraic Formulation of the Multiplexing Process}
According to the FF-OFDM transmitter (Section \ref{sect3}), before the S/P extraction, the encoder mathematically cascades the $n$ composite codewords $\mathbf{c}_0, \mathbf{c}_1, \ldots, \mathbf{c}_{n-1}$ into a \textit{cascaded composite (CC) codeword} $\mathbf{c}_{casc} = (\mathbf{c}_0, \mathbf{c}_1, \ldots, \mathbf{c}_{n-1})$ over $\mathrm{GF}(2^s)$ of length $n^2$. The collection of $2^{(n - m + 1)(n - 1)}$ possible CC codewords forms a linear code $C_{casc}$ over $\mathrm{GF}(2^s)$. The parity-check matrix of this CC code is the following $mn \times n^2$ block-diagonal matrix over $\mathrm{GF}(2^s)$:
\begin{equation}
  {\bf H}_{casc} = \begin{pmatrix}
    {\bf B}^{\circ 0} & & & \\
    & {\bf B}^{\circ 1} & & \\
    & & \ddots & \\
    & & & {\bf B}^{\circ (n - 1)}
    \end{pmatrix},
\end{equation}
which is an $n \times n$ diagonal array of matrices of size $m \times n$. 
The parity-check matrices ${\bf B}^{\circ 0}$ for the SPC code $C_0$ and the parity-check matrices ${\bf B}^{\circ 1}, \ldots, {\bf B}^{\circ (n-1)}$ for the $n-1$ equivalent cyclic codes lie exactly on its main diagonal.

The S/P vector extraction physically functions as a deterministic block interleaver, which interleaves the code symbols of the $n$ constituent composite codewords in $\mathbf{c}_{casc}$. This interleaving results in an \textit{interleaved CC (ICC) codeword} over $\mathrm{GF}(2^s)$:
\[ 
\mathbf{c}_{casc}^\pi = (\mathbf{c}[0], \mathbf{c}[1], \ldots, \mathbf{c}[n-1]). 
\]
Applying this identical interleaving permutation to the columns and rows of the block-diagonal matrix ${\bf H}_{casc}$, we obtain the parity-check matrix of the ICC code, denoted as ${\bf H}_{casc}^\pi$:
\begin{equation}
  {\bf H}_{casc}^\pi = [{\bf D}_{i,j}]_{0 \leq i < m, 0 \leq j < n},
\end{equation}
where each ${\bf D}_{i,j}$ is an $n \times n$ diagonal matrix over $\mathrm{GF}(2^s)$ formed by the root $\beta^{l_i}$ of the generator polynomial ${\bf g}(X)$:
\begin{equation}
  {\bf D}_{i,j} = \text{diag}\left(1, \beta^{jl_i}, \beta^{2jl_i}, \ldots, \beta^{(n - 1)jl_i}\right).
\end{equation}
Hence, ${\bf H}_{casc}^\pi$ transforms into an $m \times n$ array of $n \times n$ diagonal matrices over $\mathrm{GF}(2^s)$.

\subsection{Parity-Check Matrix and Partial Geometry of the Global Code}
The final multiplexing step applies the GFT to each parallel vector using the Vandermonde matrix $\bf V$. In the global matrix domain, this is equivalent to transforming the interleaved sequence by a global block-diagonal matrix ${\bf V}_{blk} = \text{diag}({\bf V}, {\bf V}, \ldots, {\bf V})$. 

Consequently, the parity-check matrix for the globally coupled FF-OFDM code $\mathbf{c}_{global}$ undergoes a similarity transformation:
\begin{equation} \label{eq:H_global_derivation}
  \begin{aligned}
    {\bf H}_{global} 
    &= {\bf V}_{blk} \cdot {\bf H}_{casc}^\pi \cdot {\bf V}_{blk}^{-1} \\
    &= \text{diag}({\bf V}, \ldots, {\bf V}) \cdot {\bf H}_{casc}^\pi \cdot \text{diag}({\bf V}^{-1}, \ldots, {\bf V}^{-1}) \\
    &= [{\bf V} {\bf D}_{i,j} {\bf V}^{-1}]_{0 \leq i < m, 0 \leq j < n}.
  \end{aligned}
\end{equation} 

It is a fundamental property of the finite-field Fourier transform that multiplies a geometric-progression diagonal matrix by the Fourier basis and its inverse yields a circulant permutation matrix. Specifically, multiplying the term $\mathbf{V}\mathbf{D}_{i,j}\mathbf{V}^{-1}$ out reveals that it mathematically evaluates to the $n \times n$ CPM-dispersion of the entry $\beta^{j l_i}$. Therefore, the global parity-check matrix $\mathbf{H}_{global}$ for the global $C_{global}$ for the FF-OFDM is exactly the CPM-dispersion of the original parity-check matrix ${\bf B}$ of the base cyclic code $C$:
\begin{equation}
  \mathbf{H}_{global} = \text{CPM}(\mathbf{B}).
\end{equation}
This derivation provides a profound insight: the physical OFDM-like multiplexing over a finite field mathematically translates to generating the matrix $\CPM({\bf B})$. Since ${\bf H}_{global}$ is an $m \times n$ array of $n \times n$ circulant submatrices, the null space of ${\bf H}_{global}$ over GF($2^s$) gives a quasi-cyclic (QC) code $C_{global}$ over $\mathrm{GF}(2^s)$.

Furthermore, as established in Section \ref{sect2}, $\CPM({\bf B})$ is the line-point incidence matrix of a subgeometry of the partial geometry $\PaG(n, n, n - 1)$. Hence, the global encoding for the FF-OFDM system links the global code to a partial geometry. Since ${\bf H}_{global}$ is a submatrix of the line-point incidence matrix of the partial geometry $\PaG(n, n, n - 1)$, it satisfies the RC-constraint and has orthogonal structure.

For $n > 7$, ${\bf H}_{global}$ is a low-density parity-check matrix over $\mathrm{GF}(2)$ and the global code $C_{global}$ is a QC-LDPC code over $\mathrm{GF}(2^s)$ whose associated Tanner graph has girth at least 6. Hence, the global code for the multi-user communication system can be decoded iteratively with a soft-decision iterative algorithm based on an LDPC matrix over $\mathrm{GF}(2)$. 

\vspace{-0.2in}

\subsection{The Binary Decomposition Theorem}
While $\mathbf{c}_{global}$ is constructed over $\mathrm{GF}(2^s)$, deploying a nonbinary LDPC decoder directly involves excessive computational complexity. However, the structural orthogonality of the FF-OFDM global code enables exact binary decomposition, governed by the following theorem \cite{Ref_11, Ref_19}:

\begin{theorem} \label{thm1}
A vector $\bf{v}$ of length $n^2$ over $\mathrm{GF}(2^s)$ is a codeword in $C_{global}$ if and only if each of its binary constituent vectors ${\bf v}_{b,l}$ ($0 \leq l < s$) is a codeword in the binary null space of $\mathbf{H}_{global}$. That is, ${\bf v} {\bf H}_{global}^{\rm T} = \bf{0}$ over $\mathrm{GF}(2^s)$ if and only if ${\bf v}_{b,l} {\bf H}_{global}^{\rm T} = \bf{0}$ over $\mathrm{GF}(2)$ for all $0 \leq l < s$.
\end{theorem}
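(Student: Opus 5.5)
The proof relies on one structural fact: every entry of $\mathbf{H}_{global} = \CPM(\mathbf{B})$ is $0$ or $1$. The matrix is an array of circulant permutation matrices over $\mathrm{GF}(2)$, and $\mathrm{GF}(2)$ sits inside $\mathrm{GF}(2^s)$ as the prime subfield. So when $\mathbf{H}_{global}$ acts on a $\mathrm{GF}(2^s)$-vector it only forms sums of selected coordinates, and addition in $\mathrm{GF}(2^s)$ is coordinatewise addition in $\mathrm{GF}(2)$ once we fix the basis $\{1,\alpha,\ldots,\alpha^{s-1}\}$. The plan is therefore to expand $\mathbf{v}$ in that basis and use linearity together with uniqueness of coordinates.

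First, write $\mathbf{v} = \sum_{l=0}^{s-1} \mathbf{v}_{b,l}\,\alpha^l$. Here each $\mathbf{v}_{b,l}\in\{0,1\}^{n^2}$ collects the $l$-th basis coefficient of every symbol, which matches how the transmitter forms $\mathbf{c}_k$ and how the receiver recomposes $\tilde{\mathbf{c}}_{global}$. Next, compute
\[
\mathbf{v}\mathbf{H}_{global}^{\rm T} = \sum_{l=0}^{s-1} \alpha^l \left(\mathbf{v}_{b,l}\mathbf{H}_{global}^{\rm T}\right).
\]
This uses linearity of matrix multiplication over $\mathrm{GF}(2^s)$ and the fact that the scalars $\alpha^l$ commute past a matrix. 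The key observation is that each product $\mathbf{v}_{b,l}\mathbf{H}_{global}^{\rm T}$ has both factors over $\mathrm{GF}(2)$. Because $\mathrm{GF}(2)$ is a subfield, this product computed in $\mathrm{GF}(2^s)$ equals the product computed over $\mathrm{GF}(2)$, and it is a binary vector $\mathbf{s}_l$ of length $mn$.

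For the ``if'' direction: if every $\mathbf{s}_l=\mathbf{0}$, the sum above vanishes, so $\mathbf{v}\in C_{global}$. For the ``only if'' direction: suppose $\sum_l \alpha^l \mathbf{s}_l = \mathbf{0}$ and read this coordinate by coordinate. Each coordinate is $\sum_l s_{l,r}\alpha^l$ with $s_{l,r}\in\mathrm{GF}(2)$. Since $1,\alpha,\ldots,\alpha^{s-1}$ are linearly independent over $\mathrm{GF}(2)$ (they form a basis, because the minimal polynomial of the primitive element $\alpha$ has degree $s$), every $s_{l,r}=0$. Hence each $\mathbf{v}_{b,l}$ lies in the binary null space of $\mathbf{H}_{global}$.

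The only step needing care is justifying that $\mathbf{H}_{global}$ really is binary. That comes from the identity $\mathbf{V}\mathbf{D}_{i,j}\mathbf{V}^{-1} = \CPM(\beta^{jl_i})$ and the identification $\mathbf{H}_{global}=\CPM(\mathbf{B})$ established just before the theorem, which I take as given. I would also note explicitly that the argument fails for a general parity-check matrix over $\mathrm{GF}(2^s)$ with non-binary entries. There the products $\mathbf{v}_{b,l}\mathbf{H}^{\rm T}$ are no longer binary, so uniqueness of the basis expansion cannot separate the $s$ layers. This is precisely where the CPM-dispersion structure produced by the GFT is essential.
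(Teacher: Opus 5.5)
Your proposal is correct and follows essentially the same route as the paper's proof. You expand $\mathbf{v}=\sum_{l}\mathbf{v}_{b,l}\alpha^{l}$, pull out the scalars to get $\mathbf{v}\mathbf{H}_{global}^{\rm T}=\sum_{l}(\mathbf{v}_{b,l}\mathbf{H}_{global}^{\rm T})\alpha^{l}$, use that each term is a binary vector because $\mathbf{H}_{global}=\mathrm{CPM}(\mathbf{B})$ is binary, and conclude from the linear independence of $\{1,\alpha,\ldots,\alpha^{s-1}\}$ over $\mathrm{GF}(2)$. Your coordinatewise reading of the syndrome and your justification of the basis via the degree of the minimal polynomial only make explicit what the paper leaves implicit.
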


\begin{proof}

 Let ${\bf v} = (v_0, v_1, \ldots, v_{n^2-1})$ be a vector of $n^2$ components over $\mathrm{GF}(2^s)$. Let $\alpha$ be a primitive element in $\mathrm{GF}(2^s)$ such that $\{\alpha^0=1, \alpha, \ldots, \alpha^{s-1}\}$ forms a basis of $\mathrm{GF}(2^s)$ over $\mathrm{GF}(2)$.

 For $0 \leq j < n^2$, we express the $j$-th component $v_j$ as a linear combination of the basis elements:
\[ 
v_j = v_{j,0} + v_{j,1}\alpha + \cdots + v_{j,s - 1}\alpha^{s - 1}, 
\]
where $v_{j,l} \in \mathrm{GF}(2)$ for $0 \leq l < s$. We can thus decompose the entire vector $\bf{v}$ into $s$ parallel binary vectors over $\mathrm{GF}(2)$:
\[ 
{\bf v}_{b,l} = (v_{0,l}, v_{1,l}, \ldots, v_{n^2-1,l}), \quad 0 \leq l < s.
\]
The vector $\bf{v}$ is simply the $2^s$-ary composition of these binary constituent vectors:
\[ 
  {\bf v} = {\bf v}_{b,0} + {\bf v}_{b,1}\alpha + \cdots + {\bf v}_{b,s-1}\alpha^{s - 1}. 
\]

Evaluating the syndrome of the global vector by right-multiplying by the transposed global parity-check matrix, we have:
\begin{equation}
  \begin{aligned}
  {\bf v} {\bf H}_{global}^{\rm T} 
  &= ({\bf v}_{b,0} + {\bf v}_{b,1}\alpha + \cdots + {\bf v}_{b,s-1}\alpha^{s - 1}) {\bf H}_{global}^{\rm T}\\
  &= \sum_{l=0}^{s-1} \left({\bf v}_{b,l} {\bf H}_{global}^{\rm T}\right) \alpha^{l}.
  \end{aligned}
\end{equation}

Because ${\bf H}_{global}$ is a binary matrix (i.e., its entries belong to $\mathrm{GF}(2)$), each product term $\left({\bf v}_{b,l} {\bf H}_{global}^{\rm T}\right)$ inherently yields a vector over $\mathrm{GF}(2)$. Since the elements $\{1, \alpha, \ldots, \alpha^{s-1}\}$ are strictly linearly independent over $\mathrm{GF}(2)$, the summation $\sum_{l=0}^{s-1} \left({\bf v}_{b,l} {\bf H}_{global}^{\rm T}\right) \alpha^{l} = {\bf 0}$ holds true \textit{if and only if} every individual coefficient evaluates to zero.

Therefore, ${\bf v} {\bf H}_{global}^{\rm T} = {\bf 0}$ if and only if ${\bf v}_{b,l} {\bf H}_{global}^{\rm T} = {\bf 0}$ for all $0 \leq l < s$. This completes the proof.
\end{proof}

This theorem provides the rigorous mathematical foundation for Step 1 of the FF-OFDM receiver (Section \ref{sect3}). It validates that instead of utilizing an impractically complex nonbinary decoder, the receiver can optimally decode the $ns$ multiplexed streams simultaneously using $s$ parallel binary soft-decision QC-LDPC decoders, unlocking massive multi-user connectivity with highly manageable hardware efficiency.

\section{Performance Metrics and Decoding Complexity Analysis} \label{sect5}
In this section, we define the error performance metrics used to evaluate the proposed FF-OFDM system and provide a rigorous analysis of its computational decoding complexity.

\subsection{Measure of Error Performance}
To accurately assess the reliability of the globally coded multiuser communication system, we utilize two primary performance metrics: the Global Error Rate (GER), denoted by $P_{GER}$, and the Word Error Rate (WER), denoted by $P_{WER}$. 

The GER, $P_{GER}$, is defined as the probability that a received global FF-OFDM codeword is decoded incorrectly (i.e., at least one bit in the global block fails to converge to the correct value). The WER, $P_{WER}$, is defined as the probability that a specific composite codeword (corresponding to a subcarrier data block) extracted from the decoded global codeword is incorrect. 

The error probability $P_{WER}$ of a composite codeword is analytically bounded by the global error probability $P_{GER}$. Let $\lambda$ be the average number of composite codewords in a decoded global codeword that are incorrectly decoded. The relationship is given by:
\begin{equation}\label{eq:wer_ger_relation}
  P_{WER} = \frac{\lambda}{n} \cdot P_{GER}.
\end{equation}
Since by definition $1 \leq \lambda \leq n$, the global error probability $P_{GER}$ serves as a strict upper bound on the subcarrier word error probability $P_{WER}$. In our extensive simulations, we observe that the gap between these two error probabilities is remarkably tight. This reflects the holistic nature of the global LDPC graph: if a received global codeword is not decoded correctly, the residual errors are typically distributed such that most of the $n$ constituent codewords are affected. 

To benchmark the coding gain achieved by the FF-OFDM multiplexing, we establish a baseline error probability, denoted as $P^{*}_{WER}$. This baseline represents the probability of decoding failure when each $(n, n-m)$ cyclic codeword is transmitted and decoded entirely independently using standard soft-decision or hard-decision algorithms, without any global coupling. The true advantage of the FF-OFDM system is quantified by comparing $P_{WER}$ against $P^{*}_{WER}$.

\vspace{-0.2in}
\subsection{Decoding Complexity and Hardware Parallelism}
A critical metric for practical communication systems is the computational complexity of the receiver. Conventionally, joint decoding of a codeword over a high-order Galois field $\mathrm{GF}(2^s)$ requires nonbinary LDPC decoding algorithms (e.g., the Q-ary Sum-Product Algorithm, QSPA). The complexity of such algorithms is typically scaled by $\mathcal{O}(q \log q)$ or worse, where $q = 2^s$. For large $s$, this exponential growth yields prohibitive hardware complexity and latency.

The proposed FF-OFDM scheme elegantly circumvents this bottleneck through the Binary Decomposition Theorem (Theorem \ref{thm1}). Instead of a single nonbinary decoder, the receiver employs $s$ parallel binary decoders.

Suppose we decode the global code using a scaled Min-Sum Algorithm (MSA) \cite{Ref_11, Ref_19}. 
The global parity-check matrix $\mathbf{H}_{global} = \CPM({\bf B})$ is a $mn \times n^2$ matrix with column and row weights $m$ and $n$, respectively. The number of 1-entries (which correspond to the edges in the Tanner graph) is $mn^2$.

During each iteration of the MSA, messages are passed along these edges between variable nodes and check nodes. For a single binary layer, updating the reliabilities requires a small, constant number of real-number operations (additions and comparisons) per edge, i.e., typically around 3 operations per edge in a highly optimized MSA implementation. Because the FF-OFDM receiver processes $s$ independent binary layers, the total number of edges processed per iteration is $smn^2$. Therefore, the average number of real-number computations required per iteration for the entire received global codeword is:
\begin{equation} \label{eq:complexity_global}
  N_{avg} \approx 3smn^2.
\end{equation}

To understand the efficiency on a subcarrier basis, we amortize this total complexity over the $ns$ constituent user streams. The average number of computations required to update the reliabilities for a single received user codeword of length $n$ per iteration is exactly:
\begin{equation} \label{eq:complexity_user}
  \frac{N_{avg}}{ns} = \frac{3smn^2}{ns} = 3mn.
\end{equation}

This is a profoundly significant analytical result. It mathematically demonstrates that the FF-OFDM scheme achieves the massive coding gains and diversity of an ultra-long global LDPC block code, yet the amortized decoding complexity scales merely linearly with the base code parameters ($m$ and $n$).

Furthermore, from a hardware architecture perspective, the $s$ binary QC-LDPC decoders can operate strictly in parallel without any data dependency between them during the iteration process. This parallelism ensures that the decoding latency of the FF-OFDM system is essentially identical to the latency of decoding a single binary LDPC code. Consequently, the proposed architecture provides an exceptional trade-off, delivering near-capacity error performance while maintaining highly manageable silicon area and energy consumption.

\section{Simulation Examples and Performance Analysis} \label{sect6}

To comprehensively evaluate the proposed FF-OFDM system, extensive Monte Carlo simulations are conducted over the AWGNC using BPSK signaling. In all scenarios, the receiver employs the joint binary parallel decoding scheme detailed in Section \ref{sect3}, utilizing the MSA with a designated scaling factor to optimize belief propagation on the global QC-LDPC graph.

\subsection{Scenario One: High-Capacity Point-to-Point Multiplexing via Binary Cyclic Base Codes}
In this scenario, we evaluate the FF-OFDM system configured for aggregated point-to-point transmission. The premise is that a centralized transmitting node (e.g., a baseband processing unit) needs to multiplex $ns$ independent binary data streams into a single high-capacity waveform. These streams are locally encoded using binary cyclic codes, aggregated into composite $\mathrm{GF}(2^s)$ subcarrier blocks, and multiplexed via the GFT for transmission over a single AWGNC.

\begin{figure}[tbp]
  \centering
  % Please replace with your actual EPS file
  \includegraphics[height=0.4\textwidth, width=0.49\textwidth]{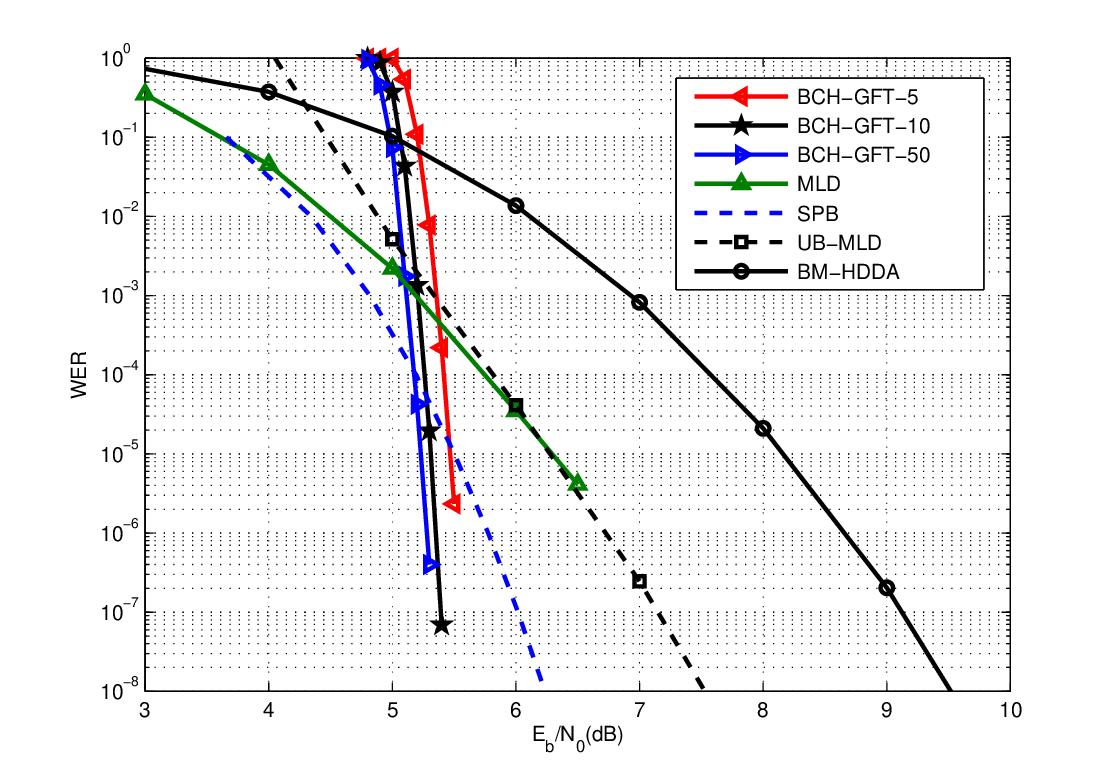}
  \caption{WER performance of the point-to-point FF-OFDM multiplexed system utilizing a (127, 113) binary BCH base code ($n=127, s=7$).}
 \label{fig:eg1}
\end{figure}

\begin{example} \label{eg1}
In this example, we consider an FF-OFDM system utilizing a BCH code and its Hadamard equivalents as the orthogonal algebraic subcarriers. Let $s = 7$ and $\beta$ be a primitive element in $\mathrm{GF}(2^7)$ with an order of $n = 127$. We select a $(127, 113)$ binary BCH base code $C$ with a rate of $0.89$ and a minimum distance of $5$. Its generator polynomial ${\bf g}(X)$ contains $m=14$ roots.

The parity-check matrix ${\bf B}$ of $C$ is a $14 \times 127$ matrix over $\mathrm{GF}(2^7)$. Following the theoretical derivation in Section \ref{sect4}, its CPM-dispersion $\CPM({\bf B})$ expands into a $1778 \times 16129$ binary low-density matrix. 
{It is the line-point incidence matrix} of a subgeometry of $\PaG(127, 127, 126)$, with column and row weights of $14$ and $127$, respectively.

The centralized transmitter is configured to aggregate $ns = 127 \times 7 = 889$ distinct data streams. These streams are partitioned into $127$ groups, i.e.,$Q_0, \ldots, Q_{126}$, each processing $7$ streams. For $1 \leq k \leq 126$, group $Q_k$ encodes its streams using the $k$-th Hadamard equivalent $C^{\circ k}$, while $Q_0$ uses the $(127, 126)$ SPC code. After local encoding, the $7$ binary codewords in each group are combined into a composite codeword over $\mathrm{GF}(2^7)$. The $127$ composite codewords are subjected to S/P extraction and GFT synthesis. This global coupling yields a $(16129, 14364)$ global code $C_{global}$ over $\mathrm{GF}(2^7)$, maintaining a high system transmission rate of $0.8905$.

At the receiving node, the global code $C_{global}$ is decoded using the joint parallel MSA with a scaling factor of $0.625$. The WER performance using $5$, $10$, and $50$ iterations is shown in Fig. \ref{fig:eg1}. The decoding converges extremely fast, with the gap between $10$ and $50$ iterations being less than $0.1$ dB. 

Also included in Fig. \ref{fig:eg1} is the error performance of the baseline $(127, 113)$ BCH code decoded independently with Maximum Likelihood Decoding (MLD), its Union Bound (UB), and the Sphere Packing Bound (SPB). At a WER of $10^{-5}$, the FF-OFDM global decoding with $50$ iterations achieves more than a $1$ dB coding gain over the independent MLD. Notably, for SNRs above $5.5$ dB, the waterfall curves of the global decoding strictly surpass the theoretical SPB of the independently decoded BCH code. At a WER of $10^{-7}$, the global decoding achieves a substantial $3.7$ dB coding gain over standard Berlekamp-Massey hard-decision decoding (BM-HDDA).
\end{example}

The remarkable results in Example \ref{eg1} demonstrate that by employing FF-OFDM multiplexing, even a short BCH code of minimum distance 5 can leverage massive block-length diversity to support highly reliable aggregated communications.

\begin{example} \label{eg2}
In this example, we consider an FF-OFDM system utilizing another BCH code and its Hadamard equivalents as the orthogonal algebraic subcarriers. Let $s = 7$  and $\beta$ be a primitive element in GF($2^7$) with an order of $n =127$. The chosen base BCH code $C$ is the $(127, 120)$ code with rate $0.9449$ and minimum distance of $3$. It is a single error-correcting code. In fact, it is a Hamming code. Its generator polynomial ${\bf g}(X)$ has $\beta$ and its $6$ conjugates $\beta^2, \beta^4, \beta^8, \beta^{16}, \beta^{32}$, and $\beta^{64}$ as roots.

The parity-check matrix $\bf B$ of $C$ is a $7 \times 127$ matrix over GF($2^7$). The CPM-dispersion CPM(${\bf B}$) of $\bf B$ with respect to the cyclic group of GF($2^7$) is a $7 \times 127$ array ${\bf H}_{global}$ of CPMs of size $127 \times 127$. The array ${\bf H}_{global}$ is an $889 \times 16129$ low-density matrix with column and row weights $7$ and $127$, respectively. It is the line-point incidence matrix of a subgeometry of the partial geometry PaG$(127, 127, 126)$ as presented in Example 1.

The centralized transmitter is configured to aggregate $ns =127\times7=889$  distinct data streams. These streams are partitioned into $127$ groups, i.e, $Q_0, Q_1, ..., Q_{126}$, each processing $7$ streams. For $1 \le k \le 126$ , group $Q_k$ encodes its streams using the $k$-th Hadamard equivalent $C^{\circ k}$, while $Q_0$ uses the $(127,126)$ SPC code. After local encoding, the $7$ binary codewords in each group are combined into a composite codeword over GF($2^7$). The $127$ composite codewords are subjected to S/P extraction and GFT synthesis. This global coupling yields a $(16129, 15246)$ global LDPC code $C_{global}$ over GF($2^7$) which provides a very high system transmission rate $0.94525$.

At the receiving node, the global code $C_{global}$ is decoded using the joint parallel MSA with a scaling factor of $0.625$. The WER performances using $5$, $10$, and $50$ iterations are shown in Fig. \ref{fig:eg2}. The decoding converges extremely fast, with the gap between $10$ and $50$ iterations less than $0.1$ dB.

Also included in Fig. \ref{fig:eg2} is the error performance of the baseline $(127,120)$ BCH code decoded independently with MLD, its UB, and the SPB. At a WER of $10^{-8}$, the FF-OFDM global decoding with only $10$ iterations achieves a $3.5$ dB over the independent MLD, and $3.4$ dB and $1.5$ dB coding gains over the UB and SPB on MLD. At a WER of $10^{-8}$, the global decoding achieves a substantial $5.2$ dB coding gain over the BM-HDDA.

In this example, we demonstrate that even using a simple binary cyclic code as the base code, globally coupled FF-OFDM can achieve highly reliable information transmission  for a high-capacity point-to-point multiplexing communication system.
\end{example}

\begin{figure}[tbp]
  \centering
  \includegraphics[height=0.4\textwidth, width=0.49\textwidth]{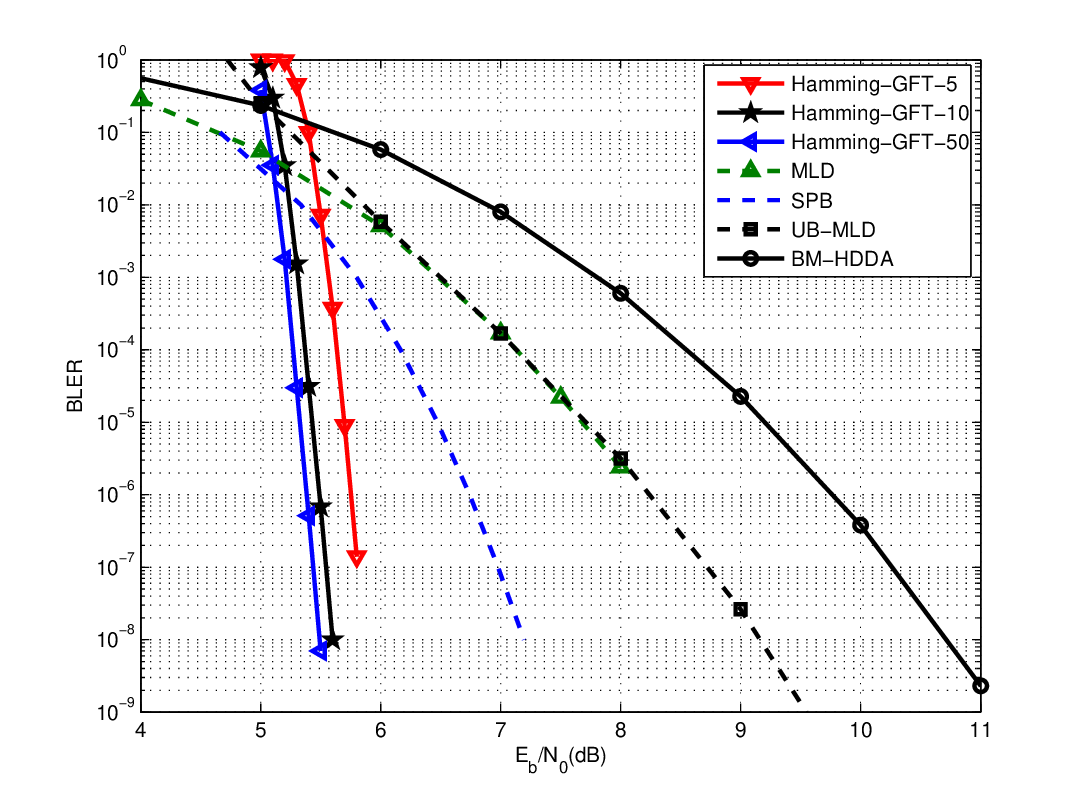}
  \caption{WER performance of the point-to-point FF-OFDM multiplexed system utilizing a $(127, 120)$ binary BCH base code over $(n=127, s=7)$.}
 \label{fig:eg2}
\end{figure}

\vspace{-0.2in}
\subsection{Scenario Two: Point-to-Point Multiplexing via Nonbinary Cyclic Base Codes} 
In this scenario, we elevate the system by employing nonbinary cyclic codes, specifically Reed-Solomon (RS) codes, as the underlying base codes. The physical premise remains a point-to-point aggregated transmission, but to maximize the nonbinary distance properties, the $ns$ binary data streams are first aggregated and mapped into $\mathrm{GF}(2^s)$ symbols \textit{prior} to local encoding. This ensures the algebraic subcarriers operate directly on nonbinary field elements before the GFT synthesis.

\vspace{-0.1in}
\begin{example} \label{eg3}
Let $s = 7$ and $n = 127$. We select a $(127, 121)$ RS code over $\mathrm{GF}(2^7)$ with a rate of $0.952$ and a minimum distance of $7$ ($m=6$). The generator polynomial has $\beta, \beta^2, \ldots, \beta^6$ as roots. The CPM-dispersion $\CPM(\mathbf{B})$ is a $762 \times 16129$ binary low-density matrix serving as a subgeometry of $\PaG(127, 127, 126)$.

In the global encoding for this $889$-stream system, $7$ parallel binary streams (each providing $121$ information bits) within group $k$ ($1 \leq k < 126$) are combined into a nonbinary message $\mathbf{b}_k$ comprising $121$ symbols over $\mathrm{GF}(2^7)$. This message is then directly encoded into the $k$-th Hadamard equivalent RS code ${C}^{\circ k}$. Group $0$ encodes into the $(127, 126)$ nonbinary SPC code. The S/P extraction and GFT synthesis globally couple these $127$ subcarriers into a $(16129, 15372)$ global QC-LDPC code ${C}_{global}$ over $\mathrm{GF}(2^7)$ with a rate of $0.9530$.

The WER performance of the global decoding using $5$, $10$, and $50$ iterations of the scaled MSA (factor $0.625$) is shown in Fig. \ref{fig:eg3}. The error performance curves drop precipitously. At a WER of $10^{-6}$, the global RS-based code decoded with $50$ MSA iterations performs within $0.3$ dB of the independent SPB. As the SNR increases, the WER curve tightly hugs the SPB trajectory, proving that the GFT-based coupling excellently preserves the Maximum Distance Separable (MDS) properties of the RS codes across the multiplexed subcarriers.
\end{example}

\begin{figure}[tbp]
  \centering
  \includegraphics[width=0.48\textwidth]{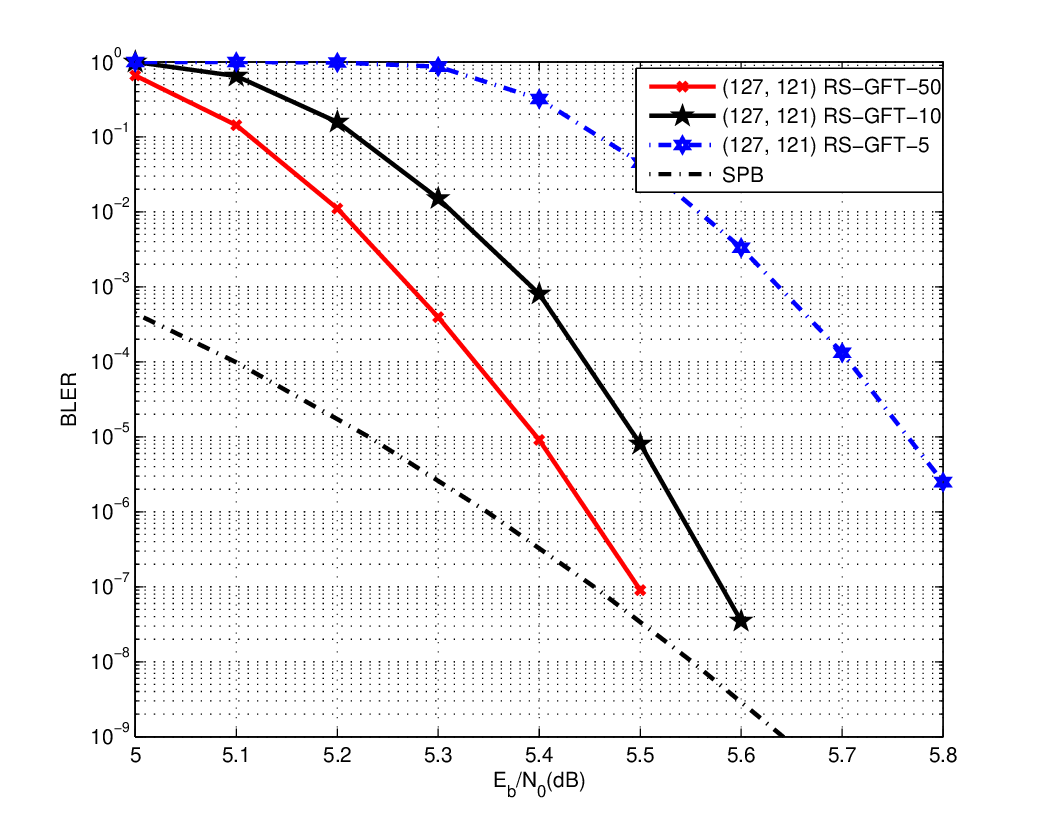}
   \caption{WER performance of the globally coded FF-OFDM point-to-point system utilizing a $(127, 121)$ nonbinary RS base code over GF$(2^7)$.}
 \label{fig:eg3}
\end{figure}

\begin{example} \label{eg4}
A critical concern for practical LDPC codes is the presence of an error floor caused by local topological trapping sets. The global code \( C_{global} \) inherently possesses an associated binary QC-LDPC code \( C_{binary} \), derived from the null space over \( \mathrm{GF}(2) \) of the incidence matrix \( \CPM({\bf B}) \). 

To physically verify the graph's integrity, we simulate the underlying \( (16129, 15372) \) binary QC-LDPC code generated in Example \ref{eg3}. As shown in Fig. \ref{fig:eg4}, the code successfully achieves a BER of \( 10^{-15} \) and a WER of nearly \( 10^{-12} \) without any visible error floors. The performance gap between 5 and 50 iterations at a BER of \( 10^{-12} \) is merely $0.5$ dB, and the gap between the performance curves of $10$ and $50$ iterations is about $0.1$ dB.
This empirical data flawlessly corroborates the theoretical partial geometry analysis: the strict RC-constraint of $\CPM(\mathbf{B})$ enforces a Tanner graph girth of at least 6, thereby eliminating length-4 cycles and ensuring absolute reliability.
\end{example}

\begin{figure}[tbp]
  \centering
  \includegraphics[height=0.4\textwidth, width=0.49\textwidth]{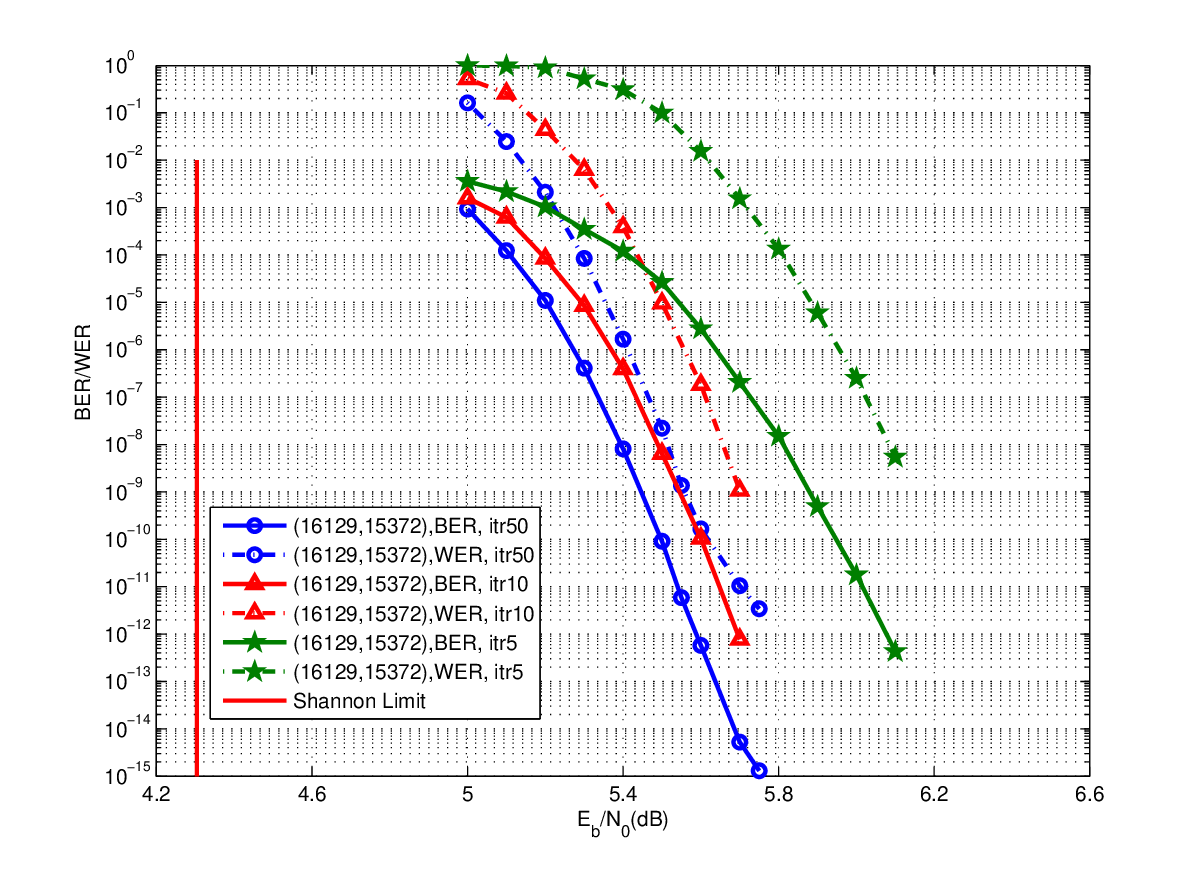}
   \caption{BER and WER performance of the underlying $(16129, 15372)$ binary QC-LDPC code from the RS-based system, demonstrating the complete absence of error floors.}
 \label{fig:eg4}
\end{figure}

\subsection{Scenario Three: FF-OFDM for Broadcast Communication Systems}
While Scenarios One and Two focus on point-to-point aggregated transmission, the FF-OFDM architecture inherently adapts to point-to-multipoint Broadcast Systems (Downlink). In this topology, a single transmitting center broadcasts a globally coupled signal to $n$ spatially distributed receiver-groups.

In transmission, the broadcaster continuously segments the incoming data into blocks of $(n - 1)s$ messages. As soon as $s$ consecutive messages are encoded into subcarrier codewords, they are combined, processed through the S/P interleaver, and shifted into the GFT-unit block by block. The synthesized global codeword is then broadcasted over the air. At the receiving end, each individual receiver captures the broadcasted waveform and independently experiences AWGN. Crucially, each receiver decodes the \textit{entire} global codeword, harvesting the massive cooperative coding gain provided by the global QC-LDPC graph, but post-processing only decomposes and extracts the specific data streams intended for its own group.

\begin{example} \label{eg5}
Let $s =11$, $n = 89$, and $m = 4$ where $89$ is a prime factor of $2^{11}-1 = 2047$ and $2047 = 23 \times 89$. Let $\alpha$ be a primitive element of the field GF($2^{11}$) and $\beta = \alpha^{23}$. The order of $\beta$ is $89$. The elements $\beta^0 = 1, \beta, ..., \beta^{88}$ form the cyclic subgroup $\bf G$ of GF($2^{11}$).

We configure the transmitter using the $(89,85)$ RS code of rate $0.9550$ over GF($2^{11}$) as the base code whose generator polynomial ${\bf g}(X)$ has $\beta, \beta^2, \beta^3$ and $\beta^4$ as roots. The parity-check matrix $\bf B$ of the code is a $4 \times 89$ matrix over GF($2^{11}$).

The CPM-dispersion CPM($\bf B$) with respect to ${\bf G}$ is a $4 \times 89$ array of CPMs of size $89 \times 89$ which is a $356 \times 7921$  matrix over GF($2$). It is the line-point incidence matrix of a subgeometry of the partial geometry PaG($89, 89, 88$) constructed by using the $89 \times 89$ Vandermonde matrix constructed based on the cyclic subgroup $\bf G$ of GF($2^{11}$).

During global encoding, a massive sequence of $(n - 1)(n - m + 1)s = 88 \times 86 \times 11 = 83,248$ binary information bits is continuously segmented. These bits are divided into $968$ messages (each of $85$ bits) and $11$ messages (each of $88$ bits). Through $\mathrm{GF}(2)$ to $\mathrm{GF}(2^{11})$ mapping, these are composed into $88$ nonbinary messages of length $85$ and $1$ nonbinary message of length $88$. These are encoded into $88$ RS codewords and $1$ SPC codeword, respectively. Following Hadamard permutation, interleaving, and GFT synthesis, the broadcast signal forms a $(7921, 7568)$ global QC-LDPC code over $\mathrm{GF}(2^{11})$.

Assuming independent AWGN channels to the receivers, the error performance at a representative receiver using $50$ MSA iterations (scaling factor $0.75$) is shown in Fig. \ref{fig:eg5}. The resulting waterfall curve drops precipitously. At a WER of $10^{-7}$, the performance gap to the theoretical independent SPB is a mere $0.16$ dB, and the curves are projected to completely overlap below a WER of $10^{-8}$. This compelling result verifies that FF-OFDM enables near-capacity broadcasting over algebraic finite-field channels, allowing individual receivers to decode with ultra-high reliability by leveraging the global redundancy structure.
\end{example}

\begin{figure}[tbp]
  \centering
  \includegraphics[height=0.4\textwidth, width=0.49\textwidth]{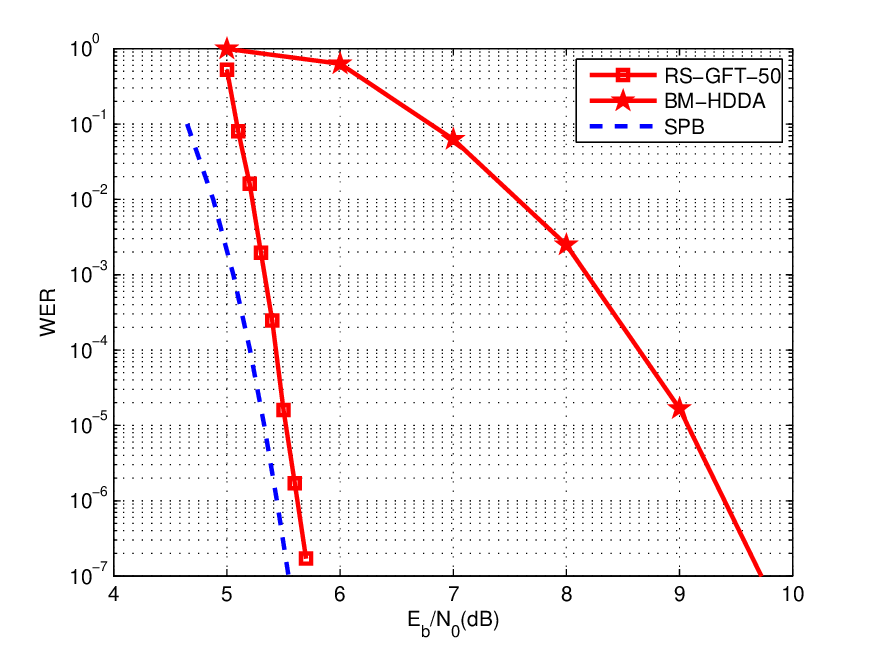}
   \caption{WER performance of the FF-OFDM broadcast communication system (Downlink) utilizing an $(89, 85)$ RS base code over $\mathrm{GF}(2^{11})$.}
 \label{fig:eg5}
\end{figure}

\section{Conclusion and Remarks}\label{sect7}
In this paper, we proposed a highly efficient global coded-multiplexing scheme, conceptualized as FF-OFDM, to provide ultra-reliable transmission for high-capacity point-to-point and broadcast communication systems. By employing a cyclic code of prime length and its Hadamard equivalents as orthogonal algebraic subcarriers, the input data streams are locally modulated, interleaved via serial-to-parallel extraction, and globally coupled through the GFT. We mathematically proved that this physical multiplexing over $\mathrm{GF}(2^s)$ intrinsically synthesizes a global QC-LDPC code, whose parity-check matrix is governed by the structural rigor of partial geometries. At the receiver, rather than demultiplexing first, the global codeword undergoes joint binary iterative soft-decision decoding. This architecture permits seamless reliability information sharing across all multiplexed subcarriers. Simulation results confirmed that the FF-OFDM receiver yields massive coding gains, closely approaching the sphere packing bound, and converges rapidly without visible error floors, all while maintaining a strictly linear amortized decoding complexity.

While the simulation examples in this study primarily utilized BCH and Reed-Solomon codes as the underlying subcarrier base codes, the FF-OFDM framework is highly versatile and generalizes to a broader algebraic context. Other classes of cyclic codes, such as cyclic Reed-Muller codes \cite{Ref_9, Ref_11}, polynomial codes \cite{Ref_22}, projective geometry codes \cite{Ref_11}, and quadratic residue codes \cite{Ref_10, Ref_11}, can be seamlessly integrated as base codes. 

Furthermore, an intriguing avenue for future research arises when the base code itself is a cyclic LDPC code. In such configurations, the FF-OFDM architecture yields a \textit{doubly iterative decodable} LDPC system. For instance, employing a $(73, 45)$ cyclic projective geometry LDPC code \cite{Ref_9, Ref_11} over $\mathrm{GF}(2^9)$ and its $71$ Hadamard equivalents can support a $657$-stream multiplexed system. In this case, the global code is a doubly iterative decodable LDPC code of length $8649$ with a minimum distance of $28$. Both the global joint detection and the local post-demultiplexing decoding can be executed via binary soft-decision algorithms. Alternatively, if the cyclic base code is relatively short, applying Ordered Statistic Decoding (OSD) \cite{Ref_9, Ref_23} or MLD at the local subcarrier level, complementing the global soft-decision receiver, could push the error performance even closer to the theoretical channel capacity.

Finally, while this foundational study rigorously validates the algebraic multiplexing principles and the theoretical coding gains of the FF-OFDM architecture over the AWGNC, practical deployment in wireless environments necessitates addressing multipath fading. In classical OFDM, time-dispersive channels are mitigated through the insertion of a cyclic prefix (CP) and frequency-domain equalization (FDE). Analogously, a critical and promising direction for future work involves extending the FF-OFDM framework to multipath fading channels. This will entail the development of finite-field equalization techniques and appropriate guard intervals to preserve the strict algebraic orthogonality of the GFT subcarriers against inter-symbol interference (ISI), thereby fully unlocking the scheme's potential for robust real-world wireless communications.

%\vspace{-0.1in}

\end{document}